\newtheorem{theorem}{Theorem}[section]
\newtheorem{proposition}{Proposition}[section] 
\newtheorem{lemma}{Lemma}[section]
\newtheorem{corollary}{Corollary}[section]
\begin{document}
	
	\title{Computational Complexity of Finding Subgroups of a Given Order}

	\author{K. Lakshmanan \\
		Department of Computer Science and Engineering, \\ Indian Institute of Technology (BHU), Varanasi 221005, India\\
		Email: lakshmanank.cse@iitbhu.ac.in}
	
	\date{}
	
	\maketitle
	
	\begin{abstract}
		We study the problem of finding a subgroup of a given order in a finite group, where the group is represented by its Cayley table. We analyze the complexity of the problem in the special case of abelian groups and present an optimal algorithm for finding a subgroup of a given order when the input is given in the form of a Cayley table. To the best of our knowledge, no prior work has addressed the complexity of this problem under the Cayley table representation.\\
		\textbf{Keywords: } Computational group theory, Cayley table representation, subgroup enumeration, abelian groups, algorithmic group theory \\
		\textbf{Mathematics Subject Classification 2020: 68Q17, 20-08 }	
	\end{abstract}
	
	\section{Introduction}
	
	In this paper we study the problem of constructing a subgroup of prescribed
	order in a finite group.
	More precisely, given a finite group $G$ of order $n$
	and a positive integer $m$ dividing $n$,
	we consider the task of computing a subgroup of $G$ of order $m$.
	
	We work in the Cayley table representation,
	where the group operation is given explicitly as an $n \times n$ multiplication table.
	Although this representation is space-intensive,
	it provides a uniform and model-independent description of finite groups.
	Algorithms in this setting operate directly on the multiplication structure,
	without relying on additional structural information such as generators,
	relations, or permutation embeddings.
	
	We show that in the case of finite abelian groups
	the subgroup-of-order problem admits a direct polynomial-time solution
	in the Cayley table model.
	The algorithm proceeds by systematically examining cyclic components
	and discarding elements whose orders are incompatible
	with the prescribed subgroup order.
	Its correctness follows from standard properties of finite abelian groups,
	in particular the existence of subgroups of every divisor of the group order.
	
	While more compact representations often allow
	structural techniques to be applied,
	the Cayley table model serves as a natural baseline:
	it makes no assumptions beyond the explicit group operation.
	The results presented here demonstrate that,
	even in this general representation,
	the subgroup-of-order problem is efficiently solvable
	for finite abelian groups.

	\subsection{Computing a Subgroup of Given Order in Abelian Groups}
	
	Here, we are given a finite abelian group \( G \) of order \( k \) and a positive divisor \( m \) of \( k \). The group is specified by its elements and its operation through the Cayley table. We do not assume any additional structural knowledge, such as a set of generators. The objective is to find a subgroup \( H \) of order \( m \).  
	
	It is a fundamental result in group theory that such a subgroup \( H \) always exists in a finite abelian group. However, we are not aware of any efficient algorithm in the literature for explicitly computing \( H \), apart from a naive brute-force search. While extensive research exists on computing Sylow subgroups in general finite groups \cite{sylow1, sylow2, sylow3, sylow4}, these techniques do not directly apply to our setting. In this paper, we present an asymptotically optimal simple algorithm for this problem.

	\subsection{Basic Results}  
	
	We begin by recalling fundamental results from group theory. Proofs of these can be found in standard textbooks (\cite{hunger, lang}). Let \( |G| \) denote the order of a group \( G \), and let \( |G:H| \) denote the index of a subgroup \( H \) in \( G \). A fundamental property states that if \( H \) is a subgroup of \( G \), then  
	
	\[
	|G| = |G:H| \cdot |H|.
	\]  
	
	In particular, this implies that the order of any subgroup \( H \) must divide the order of \( G \). For abelian groups, the converse also holds:  
	
	\begin{theorem}
		If \( G \) is a finite abelian group of order \( k \), then for every positive divisor \( m \) of \( k \), there exists a subgroup of \( G \) of order \( m \).
	\end{theorem}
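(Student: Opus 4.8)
The plan is to argue by induction on the order $k = |G|$. The base case $m = 1$ is immediate, since the trivial subgroup $\{e\}$ has order $1$. For the inductive step I would assume the statement holds for every finite abelian group of order smaller than $k$, fix a divisor $m > 1$ of $k$, and reduce to a group of strictly smaller order.

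The first move is to peel off a prime. Since $m > 1$, choose a prime $p$ dividing $m$; as $m \mid k$, we also have $p \mid k$. The key auxiliary fact I would use here is Cauchy's theorem for abelian groups: a finite abelian group whose order is divisible by a prime $p$ has an element of order $p$. I would either cite this from the standard references already mentioned or fold in its short proof by induction on $|G|$ --- pick any nontrivial $a$; if $p$ divides the order of $a$, a suitable power of $a$ has order $p$; otherwise pass to $G/\langle a \rangle$, whose order is still divisible by $p$, apply the inductive hypothesis there, and lift an element of order $p$ back to $G$.

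Granting this, let $a \in G$ have order $p$ and set $N = \langle a \rangle$, so $|N| = p$. Because $G$ is abelian, $N$ is normal, hence $G/N$ is an abelian group of order $k/p < k$. Since $p \mid m$ and $m \mid k$, the integer $m/p$ is a divisor of $k/p$, so the inductive hypothesis applied to $G/N$ produces a subgroup $\overline{H} \le G/N$ with $|\overline{H}| = m/p$. Writing $\pi \colon G \to G/N$ for the canonical projection and setting $H = \pi^{-1}(\overline{H})$, the correspondence theorem shows $H$ is a subgroup of $G$ containing $N$, and Lagrange's identity gives $|H| = |N| \cdot |\overline{H}| = p \cdot (m/p) = m$, completing the induction. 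I would also note that this argument is effective: it describes a recursive procedure that actually constructs such an $H$, which is the point of interest for the algorithmic sections that follow.

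I expect the only genuine obstacle to be the appeal to Cauchy's theorem; the remainder is routine manipulation with Lagrange's theorem and the quotient construction. An alternative would be to invoke the fundamental theorem of finite abelian groups, decompose $G$ into cyclic factors, and assemble $H$ factor by factor, but that imports substantially heavier machinery than the self-contained induction above, so I would prefer the inductive route.
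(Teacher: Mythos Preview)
Your argument is correct: the induction on $|G|$ via Cauchy's theorem for abelian groups, followed by passage to $G/\langle a\rangle$ and the correspondence theorem, is the standard textbook proof, and your sketch of Cauchy (including the lifting step from the quotient) goes through without difficulty.

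There is, however, nothing in the paper to compare it against. The paper does not prove this theorem: it is listed under ``Basic Results'' with the remark that proofs can be found in the cited textbooks, and it is restated once more in Section~3.4, again without argument. So your write-up would be supplying a proof where the paper deliberately omits one. It is worth noting that the alternative route you mention and set aside --- building $H$ factor by factor from a cyclic decomposition --- is actually closer to what the paper leans on later, when it explains (via invariant factors of $G_m$) how the algorithm extracts the order-$m$ subgroup. Either approach would fit; your inductive one is the more self-contained choice, and your observation that it is effective is a nice link to the algorithmic content that follows.
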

	
	This result ensures the existence of a solution to our problem. Our focus is on efficiently computing such a subgroup. Let the order of an element \( a \in G \) be denoted by \( |a| \). We begin with the following simple yet useful result.
	
	\begin{proposition}\label{ordthm}
		Let \( G \) be a group and \( a \in G \) with finite order \( k > 0 \). Then, for every positive divisor \( m \) of \( k \), the order of \( a^m \) is given by  
		\[
		|a^m| = \frac{k}{m}.
		\]
	\end{proposition}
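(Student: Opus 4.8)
The plan is to prove the equality $|a^m| = k/m$ by establishing divisibility in both directions, using as the single tool the standard characterization that for an element $a$ of finite order $k$ one has $a^j = e$ if and only if $k \mid j$. Write $d := |a^m|$ for the order of $a^m$ (which is finite, being a power of a finite-order element); the goal is to show $d = k/m$.

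First I would show $d \mid k/m$. Since $m$ divides $k$, the quantity $k/m$ is a positive integer, and $(a^m)^{k/m} = a^{m \cdot (k/m)} = a^k = e$. Applying the characterization above to the element $a^m$ of order $d$, this gives $d \mid k/m$.

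Next I would show $k/m \mid d$. By definition of $d$ we have $(a^m)^d = a^{md} = e$, and since $a$ has order exactly $k$ this forces $k \mid md$. Because $m \mid k$, we may write $k = m\cdot(k/m)$; substituting and cancelling the factor $m$ yields $(k/m) \mid d$.

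Combining the two steps, $d$ and $k/m$ are positive integers each dividing the other, hence $d = k/m$, as claimed. I do not expect a genuine obstacle here: the only points requiring care are invoking cleanly the lemma $a^j = e \iff k \mid j$ (this is exactly where finiteness of the order of $a$ is used) and noting that the cancellation of $m$ in $k \mid md$ is legitimate precisely because $m \mid k$. (One could equally derive this from the more general identity $|a^m| = k/\gcd(k,m)$, but in the divisor case $\gcd(k,m)=m$ the direct argument above is shorter.)
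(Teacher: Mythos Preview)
Your proof is correct and is the standard textbook argument. The paper itself does not give a proof of this proposition: it is listed among the ``Basic Results'' whose proofs are explicitly deferred to standard references, so there is no in-paper argument to compare against.
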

	
	We denote the cyclic subgroup generated by an element \( a \in G \) as  
	\[
	\langle a \rangle = \{ a^n \mid n \in \mathbb{Z} \}.
	\]  
	More generally, for a set of elements \( \{ a_i \}_{i \in I} \) in \( G \), the subgroup they generate is denoted by \( \langle a_i \rangle_{i \in I} \). If the index set \( I \) is finite, we say that the group is finitely generated. Furthermore, we use the notation \( H < G \) to indicate that \( H \) is a subgroup of \( G \).
	
	\section{The Algorithm}  
	
	We now present a simple algorithm for computing a subgroup of a given order in a finite abelian group \( G \). The algorithm proceeds by systematically generating subgroups of \( G \), leveraging the following key observation: if \( |a| \) and \( m \) are coprime, then \( \langle a \rangle \) cannot be a part of any subgroup of order \( m \). This is formalized in the next lemma.
	
	Let us denote the least common multiple of integers \( a_1, a_2, \dots, a_r \) by \( [a_1, a_2, \dots, a_r] \) and the greatest common divisor by \( (a_1, a_2, \dots, a_r) \).
	
	\paragraph{Methodological Remark.}
	Unlike classical approaches that compute the invariant factor or primary decomposition of $G$, the present algorithm constructs the desired subgroup directly by iteratively pruning cyclic components whose orders are coprime to $m$. Thus the method avoids explicit computation of the full group decomposition and provides a direct constructive procedure in the Cayley table model.
	
	\begin{algorithm}[H]
		\caption{\textsc{FindSubgroup}$(G,m)$}
		\textbf{Input:} Finite abelian group $G$ of order $n$ (given by its Cayley table) and an integer $m \mid n$. \\
		\textbf{Output:} A subgroup $H \le G$ of order $m$.
		\hrule
		\begin{algorithmic}[1]
			
			\STATE $S \leftarrow \emptyset$ \hfill (selected generators)
			\STATE $U \leftarrow G$ \hfill (unprocessed elements)
			
			\WHILE{$U \neq \{e\}$}
			\STATE Choose $a \in U$, $a \neq e$
			\STATE Compute $C = \langle a \rangle$ and let $r = |C|$
			
			\IF{$m \mid r$}
			\STATE \textbf{return} $H = \langle a^{\,r/m} \rangle$
			\ENDIF
			
			\IF{$\gcd(r,m) > 1$}
			\STATE $S \leftarrow S \cup \{a\}$
			\STATE Let $H' = \langle S \rangle$
			\IF{$m \mid |H'|$}
			\STATE Construct a subgroup $H \le H'$ of order $m$ 
			\STATE \textbf{return} $H$
			\ENDIF
			\ENDIF
			
			\STATE $U \leftarrow U \setminus C$
			\ENDWHILE
			
			\STATE Let $H' = \langle S \rangle$
			\STATE Construct a subgroup $H \le H'$ of order $m$
			\STATE \textbf{return} $H$
			
		\end{algorithmic}
	\end{algorithm}
	\section{Correctness of the Algorithm}
	
	Throughout this section, let $G$ be a finite abelian group of order $n$, and let $m$ be a positive integer such that $m \mid n$.
	
	\subsection{Structural Preliminaries}
	
	We begin with the primary decomposition theorem for finite abelian groups.
	
	\begin{theorem}[Primary Decomposition]
		Let $G$ be a finite abelian group. Then
		\[
		G \;=\; \bigoplus_{p \mid n} G_p,
		\]
		where $G_p$ denotes the Sylow $p$-subgroup of $G$. Each element of $G$ decomposes uniquely as a product of its $p$-components.
	\end{theorem}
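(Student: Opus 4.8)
The plan is to prove this by the classical \emph{$p$-primary component} construction combined with a B\'ezout (Chinese Remainder) argument. Write $n = p_1^{e_1}\cdots p_r^{e_r}$ for the distinct prime divisors of $n$, and for each prime $p \mid n$ define
\[
G_p \;=\; \{\, g \in G : g^{p^k} = e \text{ for some } k \ge 0 \,\},
\]
the set of elements of $G$ whose order is a power of $p$. First I would verify that each $G_p$ is a subgroup: it contains $e$ and is closed under inverses, and closure under the operation uses commutativity — if $g^{p^a}=e$ and $h^{p^b}=e$ then, since $G$ is abelian, $(gh)^{p^{\max(a,b)}} = g^{p^{\max(a,b)}}h^{p^{\max(a,b)}} = e$. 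So $G_p < G$.

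The technical heart of the argument is showing $G = G_{p_1}+\cdots+G_{p_r}$, i.e.\ that every $g \in G$ is a product of elements of prime-power order. Let $|g| = d$ and factor $d = q_1^{f_1}\cdots q_s^{f_s}$. The integers $d/q_i^{f_i}$ have overall greatest common divisor $1$, so by B\'ezout there are integers $c_i$ with $\sum_i c_i\,(d/q_i^{f_i}) = 1$; then $g = \prod_i g^{\,c_i d/q_i^{f_i}}$, and by Proposition~\ref{ordthm} the $i$-th factor has order dividing $q_i^{f_i}$, hence lies in $G_{q_i} \subseteq \bigcup_p G_p$. This B\'ezout step is the main obstacle in the sense that everything else is bookkeeping once it is in place.

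Next I would show the sum is direct. If $x \in G_{p_i} \cap \langle G_{p_j} : j \neq i\rangle$, then on one hand $|x|$ is a power of $p_i$, and on the other hand, applying the closure argument to the second factor, $|x|$ is divisible only by primes $p_j$ with $j \neq i$; the only element meeting both conditions is $e$. Hence $G = \bigoplus_{p \mid n} G_p$, and directness immediately gives that each $g \in G$ has a \emph{unique} expression as a product of its $p$-components.

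Finally I would identify $G_p$ with the Sylow $p$-subgroup. Directness yields $n = |G| = \prod_i |G_{p_i}|$; since every element of $G_{p_i}$ has $p_i$-power order, Lagrange forces $|G_{p_i}|$ to be a power of $p_i$, and comparing prime factorizations of both sides of $n = \prod_i |G_{p_i}|$ gives $|G_{p_i}| = p_i^{e_i}$. Thus $G_{p_i}$ is a subgroup whose order is the exact power of $p_i$ dividing $n$, i.e.\ a Sylow $p_i$-subgroup; it is the unique one because any Sylow $p_i$-subgroup consists entirely of elements of $p_i$-power order and is therefore contained in $G_{p_i}$, which has the same order. This completes the proof, with the generation step (paragraph two) being the only nonroutine ingredient.
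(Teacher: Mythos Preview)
Your proof is correct and follows the classical route via $p$-primary components and a B\'ezout/CRT splitting. Note, however, that the paper does not actually supply a proof of this theorem: it is listed in the Structural Preliminaries as a background fact, with proofs deferred to standard references. There is therefore no ``paper's own proof'' to compare against; you have simply filled in what the paper deliberately omitted.

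One small correction in your final paragraph: you write that ``Lagrange forces $|G_{p_i}|$ to be a power of $p_i$'' because every element of $G_{p_i}$ has $p_i$-power order. Lagrange's theorem goes the other way (element orders divide the group order). What you actually need here is Cauchy's theorem: if some prime $q \neq p_i$ divided $|G_{p_i}|$, Cauchy would produce an element of order $q$ in $G_{p_i}$, contradicting the definition of $G_{p_i}$. With that adjustment the argument is complete.
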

	
	\begin{corollary}\label{primarycor}
		Let $H \le G$ be a subgroup of order $m$. Then
		\[
		H \subseteq \bigoplus_{p \mid m} G_p.
		\]
	\end{corollary}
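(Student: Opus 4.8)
The plan is to deduce the containment element by element, using the primary decomposition theorem together with Lagrange's theorem. First I would record the standard fact packaged in the decomposition $G = \bigoplus_{p \mid n} G_p$: the Sylow $p$-subgroup $G_p$ is precisely the set of elements of $G$ whose order is a power of $p$, and if $g = \prod_{p \mid n} g_p$ is the unique expression of $g$ in terms of its $p$-components $g_p \in G_p$, then $|g| = [\,|g_p| : p \mid n\,]$, with each $|g_p|$ equal to the largest power of $p$ dividing $|g|$.

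Next I would fix an arbitrary $h \in H$ and apply Lagrange's theorem to the cyclic subgroup $\langle h \rangle \le H$, so that $|h|$ divides $|H| = m$. Writing $h = \prod_{p \mid n} h_p$ with $h_p \in G_p$, each $|h_p|$ is a power of $p$ dividing $|h|$, hence dividing $m$. If $p \nmid m$, the only power of $p$ dividing $m$ is $1$, so $h_p = e$. Therefore $h = \prod_{p \mid m} h_p \in \bigoplus_{p \mid m} G_p$, and since $h$ was arbitrary this gives $H \subseteq \bigoplus_{p \mid m} G_p$. (Equivalently, one may apply primary decomposition to $H$ itself, writing $H = \bigoplus_{p \mid m} H_p$ with $H_p$ its Sylow $p$-subgroup, and observe that every element of $H_p$ has $p$-power order in $G$ and hence lies in $G_p$, so $H_p \le G_p$; summing over $p \mid m$ finishes the argument.)

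There is no substantial obstacle here: the corollary is essentially an unwinding of the primary decomposition theorem combined with Lagrange's theorem. The only point requiring a moment of care is the characterization of $G_p$ as the set of elements of $p$-power order and the fact that in a direct sum the order of an element is the least common multiple of the orders of its components — but both of these are exactly the content of the primary decomposition theorem quoted above, so they may be invoked directly.
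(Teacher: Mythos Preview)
Your proof is correct and follows essentially the same approach as the paper: both argue that every element of $H$ has order dividing $m$ (Lagrange), and hence its $p$-components vanish for primes $p \nmid m$, forcing the element into $\bigoplus_{p \mid m} G_p$. Your write-up simply makes explicit the decomposition $h = \prod_{p} h_p$ and the vanishing of the $h_p$ for $p \nmid m$, which the paper's one-line proof leaves implicit.
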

	
	\begin{proof}
		If $|H|=m$, then the order of every element of $H$ divides $m$. Hence each element of $H$ lies entirely in the direct sum of those $G_p$ for primes $p \mid m$.
	\end{proof}
	
	\subsection{Justification of the Removal Step}
	
	\begin{lemma}\label{coprimelemma}
		Let $a \in G$ satisfy $\gcd(|a|,m)=1$. Then no subgroup of $G$ of order $m$ contains any nontrivial element of $\langle a \rangle$.
	\end{lemma}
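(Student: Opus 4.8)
The plan is to argue by contradiction using Lagrange's theorem together with the coprimality hypothesis. Suppose $H \le G$ with $|H| = m$, and suppose $H$ contains a nontrivial element $x \in \langle a \rangle$. Since $x$ lies in the cyclic group $\langle a \rangle$ of order $|a|$, the order $|x|$ divides $|a|$; in particular $|x|$ divides $|a|$ and, because $x$ is nontrivial, $|x| > 1$. On the other hand, $x \in H$ implies by Lagrange's theorem (the fundamental identity $|H| = |H{:}\langle x\rangle|\cdot|\langle x\rangle|$ recalled in the Basic Results section) that $|x|$ divides $|H| = m$.

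Thus $|x|$ is a common divisor of $|a|$ and $m$ with $|x| > 1$, contradicting $\gcd(|a|,m) = 1$. Hence no such $x$ exists, which is exactly the assertion of the lemma.

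I do not expect any real obstacle here: the only ingredients are that subgroups of a cyclic group have order dividing the generator's order, and Lagrange's theorem, both already available in the excerpt. The one small point to state carefully is that a nontrivial element has order strictly greater than $1$, so that the common divisor genuinely violates coprimality rather than vacuously satisfying it. If desired, one can phrase the conclusion even more strongly: every element of $\langle a\rangle \cap H$ has order dividing $\gcd(|a|,m) = 1$, hence equals $e$, so $\langle a \rangle \cap H = \{e\}$; this is the form in which the lemma will be used to justify discarding $C = \langle a\rangle$ from $U$ in the algorithm.
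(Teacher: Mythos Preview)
Your argument is correct and essentially identical to the paper's: the paper observes that $\langle a\rangle \cap H$ is a subgroup of both $\langle a\rangle$ and $H$, so its order divides $\gcd(|a|,m)=1$, forcing $\langle a\rangle \cap H=\{e\}$. Your contradiction via the order of a single nontrivial $x$ is just the elementwise phrasing of the same Lagrange-based step, and indeed your final paragraph restates the paper's intersection version verbatim.
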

	
	\begin{proof}
		Let $H \le G$ with $|H|=m$. The intersection $\langle a\rangle \cap H$ is a subgroup of both $\langle a\rangle$ and $H$. Hence its order divides both $|a|$ and $m$. Since $\gcd(|a|,m)=1$, we must have $|\langle a\rangle \cap H|=1$. Thus $\langle a\rangle \cap H=\{e\}$.
	\end{proof}
	
	\noindent
	Lemma~\ref{coprimelemma} shows that cyclic subgroups whose order is coprime to $m$ are irrelevant for constructing subgroups of order $m$. Removing such subgroups does not eliminate any element that could lie in a subgroup of order $m$.
	
	\subsection{Structure of the Remaining Elements}
	
	Define
	\[
	G_m \;=\; \bigoplus_{p \mid m} G_p.
	\]
	
	\begin{lemma}\label{generatedequals}
		Let $a_1,\dots,a_i$ be the elements retained by the algorithm. Then
		\[
		\langle a_1,\dots,a_i\rangle \;=\; G_m.
		\]
	\end{lemma}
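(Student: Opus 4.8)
The plan is to establish the two inclusions $G_m \subseteq \langle a_1,\dots,a_i\rangle$ and $\langle a_1,\dots,a_i\rangle \subseteq G_m$ separately. I would work in the situation in which the \texttt{while} loop runs to completion (otherwise the retained set is only a fragment of the generating set, and the claim is not the one intended): in that case none of the \textbf{return} statements inside the loop fired, so every retained $a_j$ satisfies $\gcd(|a_j|,m) > 1$, and the cyclic subgroups examined — the retained ones $\langle a_1\rangle,\dots,\langle a_i\rangle$ together with the discarded ones — cover all of $G$, since each iteration deletes $\langle a\rangle$ from $U$ and the loop exits only when $U=\{e\}$.

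For $G_m \subseteq \langle a_1,\dots,a_i\rangle$ I would argue as follows. Take a nontrivial $x\in G_m$ and apply the primary decomposition: $x=\prod_{p\mid m}x_p$ with $x_p\in G_p$, where at least one $x_p\ne e$ with $p\mid m$, and note $x_p\in\langle x\rangle$. By the covering property above, $x$ lies in $\langle c\rangle$ for some element $c$ examined by the algorithm, hence $x_p\in\langle x\rangle\subseteq\langle c\rangle$, so $p\mid |c|$ and thus $\gcd(|c|,m)\ge p>1$. Therefore $c$ was retained, not discarded, and $x\in\langle c\rangle\subseteq\langle a_1,\dots,a_i\rangle$. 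This is the constructive counterpart of Lemma~\ref{coprimelemma}, and I expect it to go through cleanly using only the primary decomposition theorem.

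The reverse inclusion $\langle a_1,\dots,a_i\rangle\subseteq G_m$ is the delicate step, and I expect it to be the main obstacle: it requires that each retained generator contribute nothing outside $\bigoplus_{p\mid m}G_p$, whereas $\gcd(|a_j|,m)>1$ only guarantees \emph{some} overlap with $G_m$ — an element can have order divisible by primes both dividing and not dividing $m$ (for instance $|a_j|=6$ with $m=4$). The way I would resolve this is to read the retained generator as its $m$-primary component: writing $|a_j|=s_jt_j$ with $s_j$ the largest divisor of $|a_j|$ supported on primes dividing $m$ and $\gcd(t_j,m)=1$, Proposition~\ref{ordthm} gives $|a_j^{t_j}|=s_j$, so $a_j^{t_j}\in G_m$; one then checks that replacing $a_j$ by $a_j^{t_j}$ costs nothing in the first inclusion, since the argument there used only a prime component $x_p$ with $p\mid m$ and $\gcd(t_j,p)=1$, so $x_p$ still lies in $\langle a_j^{t_j}\rangle$. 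With the retained generators so interpreted, $\langle a_1,\dots,a_i\rangle\subseteq G_m$, and together with the first inclusion this yields equality. The real work is thus the bookkeeping that isolates the $m$-primary part of each retained element and verifies that only that part is ever used; I would organize it entirely around the primary decomposition theorem and Proposition~\ref{ordthm}, introducing no new machinery. Identifying $\langle a_1,\dots,a_i\rangle$ with $G_m$ is what then licenses the final step of the algorithm — constructing a subgroup of order $m$ inside $H'$.
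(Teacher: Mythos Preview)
Your argument for $G_m \subseteq \langle a_1,\dots,a_i\rangle$ is correct and more explicit than the paper's: you use that the examined cyclic subgroups cover $G$, so any nontrivial $x\in G_m$ lies in some $\langle c\rangle$; since $|x|$ contributes a prime $p\mid m$ to $|c|$, that $c$ was retained rather than discarded. The paper handles this same direction by a terse contradiction (``such elements would eventually be selected'').

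For the reverse inclusion you have put your finger on a genuine defect---not in your reasoning, but in the lemma as stated. The paper's proof asserts that ``each selected element lies in $G_m$'' merely because its order shares a prime with $m$; this inference is incorrect, exactly as your example $|a_j|=6$, $m=4$ shows. Concretely, in $G=\mathbb{Z}_2\oplus\mathbb{Z}_2\oplus\mathbb{Z}_3$ with $m=4$, an element of order~$6$ is retained yet does not lie in $G_m=G_2$, so $\langle a_1,\dots,a_i\rangle$ can properly contain $G_m$. Your fix---replacing each retained $a_j$ by its $m$-primary part $a_j^{t_j}$---does yield the correct equality $\langle a_1^{t_1},\dots,a_i^{t_i}\rangle=G_m$, but that is a different statement from the one claimed. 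Observe, however, that the algorithm's correctness does not actually need the reverse inclusion: constructing a subgroup of order $m$ inside $H'=\langle S\rangle$ only requires $m\mid|H'|$, and this already follows from your first inclusion via $m\mid|G_m|\mid|H'|$.
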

	
	\begin{proof}
		By Lemma~\ref{coprimelemma}, only elements whose order shares a prime divisor with $m$ are retained. Hence each selected element lies in $G_m$, so
		\[
		\langle a_1,\dots,a_i\rangle \subseteq G_m.
		\]
		
		Conversely, suppose some element of $G_m$ were not contained in $\langle a_1,\dots,a_i\rangle$. Then some nontrivial $p$-component for $p \mid m$ would remain outside the generated subgroup. Since the algorithm processes all elements not removed, such elements would eventually be selected, contradicting termination. Hence equality holds.
	\end{proof}
	
	\subsection{Existence and Construction of the Desired Subgroup}
	
	\begin{theorem}[Existence of Subgroups of Prescribed Order]
		Let $G$ be a finite abelian group and let $m \mid |G|$. Then $G$ contains a subgroup of order $m$.
	\end{theorem}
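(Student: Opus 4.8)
The plan is to reduce the statement to the case of abelian $p$-groups by means of the Primary Decomposition Theorem already recorded above, and then to settle the $p$-group case by a short induction on the exponent.

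Write $n = |G|$ and factor $m = \prod_{p \mid m} p^{b_p}$ into prime powers. Since $m \mid n$, for each prime $p \mid m$ we have $p^{b_p} \mid |G_p|$, say $|G_p| = p^{a_p}$ with $0 \le b_p \le a_p$. If for each such $p$ we can produce a subgroup $H_p \le G_p$ with $|H_p| = p^{b_p}$, then $H := \bigoplus_{p \mid m} H_p$ is a subgroup of $\bigoplus_{p \mid m} G_p \le G$ (using that distinct primary components commute and intersect trivially), and by multiplicativity of orders over an internal direct sum, $|H| = \prod_{p \mid m} p^{b_p} = m$. This reduces the theorem to the claim that every finite abelian $p$-group of order $p^{a}$ has a subgroup of order $p^{b}$ for every $0 \le b \le a$.

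For the $p$-group case I would argue by induction on $a$. The cases $a = 0$ and $b = 0$ are trivial. For the inductive step, fix $P$ abelian of order $p^{a}$ with $a \ge 1$ and $1 \le b \le a$. First produce an element of order exactly $p$: choose any $y \in P$ with $y \neq e$, of order $p^{t}$ for some $t \ge 1$, and set $x = y^{\,p^{t-1}}$, which has order $p$ by Proposition~\ref{ordthm}. Since $P$ is abelian, $\langle x\rangle$ is normal of order $p$, and $P/\langle x\rangle$ is abelian of order $p^{a-1}$. As $0 \le b-1 \le a-1$, the induction hypothesis gives a subgroup $\overline{K} \le P/\langle x\rangle$ of order $p^{b-1}$; its preimage $K$ under the quotient map $P \to P/\langle x\rangle$ is a subgroup of $P$ with $|K| = |\overline{K}|\cdot|\langle x\rangle| = p^{b-1}\cdot p = p^{b}$, as required.

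The main obstacle is getting the $p$-group reduction exactly right: once one has the existence of an element of order $p$ in a nontrivial abelian $p$-group (a one-line consequence of Proposition~\ref{ordthm}), the quotient-and-lift induction goes through cleanly, and the reassembly over distinct primes is routine given the Primary Decomposition Theorem. A shorter alternative is to invoke the structure theorem $P \cong \mathbb{Z}/p^{e_1} \oplus \cdots \oplus \mathbb{Z}/p^{e_k}$ with $\sum_i e_i = a$, pick $0 \le b_i \le e_i$ with $\sum_i b_i = b$, and take $\bigoplus_i \langle p^{\,e_i - b_i}\rangle$, each summand cyclic of order $p^{b_i}$ by Proposition~\ref{ordthm}; but since this relies on a heavier structural input, I would present the inductive argument as the primary proof, as it is self-contained given the results already stated in the paper.
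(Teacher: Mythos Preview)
Your proof is correct. Note, however, that the paper does not actually supply a proof of this theorem: it is stated twice---once among the basic results in the introduction and again here---as a classical fact, with the reader referred to standard textbooks; the sentence immediately following the present statement simply \emph{applies} the theorem to $G_m$ rather than arguing for it. Your argument, reducing to abelian $p$-groups via the Primary Decomposition and then running a quotient-and-lift induction using an element of order $p$ manufactured from Proposition~\ref{ordthm}, is one of the standard textbook proofs and is fully rigorous; it also has the virtue of being self-contained relative to the results already recorded in the paper. The shorter alternative you mention, via the invariant-factor decomposition of each $p$-component, is in fact closer in spirit to what the paper does in the surrounding text, where the desired subgroup is exhibited explicitly as $H = \langle a_1^{m_1},\dots,a_r^{m_r}\rangle$.
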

	
	Since $m \mid |G_m|$, there exists $H \le G_m$ with $|H|=m$.
	
	Because $G_m = \langle a_1,\dots,a_i\rangle$ by Lemma~\ref{generatedequals}, such a subgroup lies inside the generated subgroup maintained by the algorithm.
	
	Finally, by standard invariant factor theory, if
	\[
	G_m \cong \bigoplus_{j=1}^r \mathbb{Z}_{k_j},
	\]
	then for suitable divisors $m_j \mid k_j$ satisfying
	\[
	\prod_{j=1}^r \frac{k_j}{m_j} = \frac{|G_m|}{m},
	\]
	the subgroup
	\[
	H = \langle a_1^{m_1},\dots,a_r^{m_r}\rangle
	\]
	has order $m$.
	
	\subsection{Main Correctness Result}
	
	\begin{theorem}
		The algorithm \textsc{FindSubgroup} terminates and returns a subgroup $H \le G$ of order $m$.
	\end{theorem}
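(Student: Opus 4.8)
The plan is to split the statement into termination of the \textbf{while} loop and partial correctness, and to handle partial correctness by a case analysis on which \textbf{return} statement is executed: the early return under $m \mid r$, the return inside the loop guarded by $m \mid |H'|$, or the final return reached after the loop exits. The unifying principle is that in every case the algorithm outputs a subgroup of some finite abelian group $K \le G$ with $m \mid |K|$, and then Theorem~1.1 (a finite abelian group has a subgroup of every divisor order) supplies a subgroup of order $m$ inside $K$, which is therefore a subgroup of $G$ of order $m$.

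Termination is the easy part: each iteration removes from $U$ at least the chosen element $a$, since $a \in U \cap \langle a\rangle = U \cap C$, so $|U|$ strictly decreases and the loop runs for at most $|G|$ iterations. I would first record the small well-definedness point that the update $U \leftarrow U \setminus C$ should be read as leaving the identity in $U$ (i.e.\ as $U \setminus (C \setminus \{e\})$), so that the guard $U \neq \{e\}$ and the choice ``$a \in U$, $a \neq e$'' remain meaningful throughout.

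For partial correctness, the first case is immediate: if $m \mid r$ with $r = |a|$, then $|a^{r/m}| = m$ by Proposition~\ref{ordthm}, so $H = \langle a^{r/m}\rangle$ is a subgroup of $G$ of order exactly $m$ (this also disposes of $m = 1$, where the first chosen element triggers this branch). For the in-loop return, the guard gives $m \mid |H'|$ with $H' = \langle S\rangle$ a finite abelian subgroup of $G$, so Theorem~1.1 applies to $H'$. For the final return I have to establish $m \mid |\langle S\rangle|$ myself, since no guard enforced it: at normal termination $U$ has been exhausted, so $S$ consists of all retained elements; every nontrivial $y \in G_m$ has order whose prime factors all divide $m$, hence $\gcd(|y|,m)>1$, hence $y$ never lies in a removed cyclic subgroup and is therefore eventually contained in some retained $\langle a\rangle \subseteq \langle S\rangle$; thus $G_m \subseteq \langle S\rangle$. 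Combined with the elementary fact $m \mid |G_m|$ (writing $m = \prod_{p\mid m}p^{e_p}$, each $p^{e_p}$ is the $p$-part of $m$ and hence divides the primary factor $|G_p|$), this gives $m \mid |G_m| \mid |\langle S\rangle|$, and Theorem~1.1 finishes the case.

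The step I expect to be the main obstacle is making the subroutine ``Construct a subgroup $H \le H'$ of order $m$'' fully rigorous, and repairing two imprecisions in Section~2.4 along the way. First, I would restate Lemma~\ref{generatedequals} as the containment $G_m \subseteq \langle a_1,\dots,a_i\rangle$ rather than the equality claimed there: a retained element may have order divisible by a prime outside $m$ (it need only share one prime with $m$), so $\langle S\rangle$ can be strictly larger than $G_m$ — but, as used above, the containment is all the argument needs. Second, the retained elements need not form an independent generating set, so the subroutine should first compute, by a standard finite-abelian-group procedure applied to the multiplication data of the Cayley table (e.g.\ a Smith-normal-form computation), an invariant-factor decomposition $H' \cong \bigoplus_{j=1}^r \mathbb{Z}_{k_j}$ together with explicit generators $b_j$ of order $k_j$; one then picks divisors $m_j \mid k_j$ with $\prod_{j=1}^r m_j = m$ (possible since $m \mid |H'| = \prod_j k_j$) and sets $H = \langle b_1^{k_1/m_1},\dots,b_r^{k_r/m_r}\rangle$, whose order is $\prod_j |b_j^{k_j/m_j}| = \prod_j m_j = m$ by Proposition~\ref{ordthm} and the independence of the summands. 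The displayed formula in Section~2.4 matches this once its exponents and divisors are reconciled in this way, and everything that remains is routine bookkeeping.
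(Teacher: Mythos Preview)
Your argument is correct and rests on the same structural idea as the paper's proof: the retained generators contain $G_m$, one has $m \mid |G_m|$, and the existence theorem for finite abelian groups then furnishes the desired subgroup inside $\langle S\rangle$. Your case analysis over the three \textbf{return} points is a cleaner organisation than the paper's single-paragraph sketch, and you are right on both repairs you flag: Lemma~\ref{generatedequals} as stated asserts an equality that need not hold (a retained $a$ with $\gcd(|a|,m)>1$ may still have $|a|$ divisible by a prime not dividing $m$, so $\langle S\rangle$ can properly contain $G_m$), and the displayed construction has its exponent/product convention inverted relative to what yields $|H|=m$. Only the containment $G_m \subseteq \langle S\rangle$ is actually needed, and your version of the final construction step is the correct one; so your proposal is not a different route but a corrected and more carefully argued version of the paper's own proof.
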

	
	\begin{proof}
		At each iteration, the set $G'$ strictly decreases in size, so termination is guaranteed. By Lemma~\ref{coprimelemma}, removal of cyclic subgroups with order coprime to $m$ preserves all elements that could lie in a subgroup of order $m$. By Lemma~\ref{generatedequals}, the retained generators span precisely $G_m$. Since $m \mid |G_m|$, a subgroup of order $m$ exists inside this generated subgroup, and the final construction step produces such a subgroup. Therefore the algorithm correctly returns a subgroup of order $m$.
	\end{proof}
	
	\section{Complexity Analysis}
	
	Let $n = |G|$. Since $G$ is given by its Cayley table, the input size is $\Theta(n^2)$.
	
	Each iteration of the main loop selects an element $a$ and computes the cyclic subgroup $\langle a\rangle$. This requires at most $n$ group multiplications, hence $O(n)$ time.
	
	Each element of $G$ is processed at most once, because once $\langle a\rangle$ is removed from $G'$, none of its elements are reconsidered. Therefore the total time spent computing cyclic subgroups across all iterations is $O(n^2)$.
	
	The computation of intermediate generated subgroups $\langle a_1,\dots,a_i\rangle$ can involve at most $n$ elements and is likewise bounded by $O(n)$ per iteration. Since there are at most $n$ iterations, this contributes at most $O(n^2)$ total time.
	
	The final computation of exponents $m_j$ involves only integer gcd operations on numbers bounded by $n$, and therefore contributes at most $O(n \log n)$ time, which is negligible compared to $O(n^2)$.
	
	Hence the overall running time of the algorithm is $O(n^2)$.
	
	\begin{proposition}
		Let $G$ be a finite group of order $n$ given by its Cayley table. Any algorithm that determines or constructs a subgroup of $G$ must take $\Omega(n^2)$ time in the worst case, since the input size is $\Theta(n^2)$.
	\end{proposition}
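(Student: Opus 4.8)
The plan is to prove the bound by an adversary (indistinguishability) argument rather than by the bare observation that the Cayley table has $\Theta(n^2)$ entries: an algorithm is not, a priori, obliged to read all of its input, so the honest statement is that a \emph{correct} algorithm must, on some instance, inspect a constant fraction of the table. First I would fix the query model: the algorithm probes cells $(a,b)$ and is told the product $a\cdot b$, and eventually commits to an output subgroup by listing the labels of its $m$ elements. Let $P$ denote the set of probed cells on a given run. The goal is to exhibit, for suitable $n$ and a divisor $m$, a pair of valid group tables on the ground set $\{1,\dots,n\}$ that agree on every cell of $P$ whenever $|P| < \varepsilon n^2$, but for which the committed output cannot be simultaneously correct; since the algorithm cannot distinguish the two instances, it errs on at least one, forcing $|P| = \Omega(n^2)$ and hence running time $\Omega(n^2)$.

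For the hard family I would take $G \cong \mathbb{Z}_2^{\,k}$ with $n = 2^{k}$ and $m = n/2$, together with all relabelings of the ground set. Here the subgroups of order $m$ are exactly the $n-1$ index-two subgroups (the hyperplanes), and relabeling moves these around, so the answer is genuinely sensitive to the labeling. The combinatorial claim to establish is: from the transcript of fewer than $\varepsilon n^2$ probes the adversary can keep alive two relabelings whose families of order-$m$ subgroups have no common member -- or, more directly, for which the particular $m$-element set $H$ that the algorithm finally outputs fails to be closed under multiplication in one of the two tables. The reason to expect this is that certifying closure of an $m$-set with $m = \Theta(n)$ intuitively requires examining $\Omega(m^2) = \Omega(n^2)$ of the relevant cells, and a short probe sequence can pin down only a bounded fraction of the hyperplane structure. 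Once the claim holds, the lower bound follows, and together with the $O(n^2)$ upper bound from the previous section it shows the algorithm is asymptotically optimal; the weaker assertion $\Omega(n^2)$ already follows from the fact that any correct algorithm's output depends on $\Omega(n^2)$ cells of the $\Theta(n^2)$-size input.

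The main obstacle is the rigidity of associativity: unlike an arbitrary Latin square, a group multiplication table cannot be perturbed cell by cell, so the adversary has very limited freedom to answer probes while keeping two genuinely distinct valid tables consistent with the transcript. The technical heart is therefore to show that the set of valid group tables agreeing with any partial transcript of size $o(n^2)$ still contains instances with conflicting correct answers. The cleanest way around this is to restrict attention to relabelings of a single fixed group, so that associativity is automatic, and then to argue by a counting or expansion estimate -- bounding how many cells a given short query sequence can force via the already-revealed products -- that a positive fraction of the labeling (hence of the hyperplane structure) remains undetermined. If one is content with the statement exactly as phrased, a direct fallback is to note that solving the problem requires determining enough of the table to identify a subgroup, and that in the worst case this is $\Omega(n^2)$ cells, matching the input size $\Theta(n^2)$.
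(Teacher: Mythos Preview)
The paper's own proof is precisely the one-sentence fallback you mention at the end: the Cayley table has $n^2$ entries, so any correct algorithm must examine a constant fraction of them, hence $\Omega(n^2)$. Your instinct that this is not a genuine argument in a random-access query model is sound, and the adversary framework you outline is the standard way to turn such an assertion into a theorem.

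The difficulty is that your proposed hard family does not deliver the bound, and the gap is exactly the ``technical heart'' you left open. For any relabeling of $\mathbb{Z}_2^{k}$ an algorithm can recover the entire multiplication table with $O(n)$ probes: square any element to locate $e$, then grow a chain $\{e\}=H_0<H_1<\cdots<H_k=G$ by repeatedly choosing $g_j\notin H_{j-1}$ and computing the cosets $g_jH_{j-1},\,g_j^{2}H_{j-1},\ldots$ one product at a time; the number of probes telescopes to $n-1+O(\log n)$. After this every element is known as a word in $g_1,\ldots,g_k$, and because the group is abelian every further product is determined symbolically without additional queries. Thus after $O(n)$ probes the adversary has \emph{no} remaining freedom---only one relabeling is consistent with the transcript---so the indistinguishability claim at the centre of your plan is false for this family. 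The rigidity you correctly flagged is not merely an obstacle to the argument; it defeats it, since a Cayley table of an abelian group has only about $n$ independent entries rather than $n^2$. Consequently the proposition, read as a query-complexity lower bound, does not follow from your construction (and is plausibly false in that model); it survives only in models where the input must be scanned, and there the paper's bare input-size remark is already the whole proof.
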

	
	\begin{proof}
		The Cayley table contains $n^2$ entries. Any algorithm must inspect at least a constant fraction of the input to ensure correctness in the worst case. Hence the running time is bounded below by $\Omega(n^2)$.
	\end{proof}
	
	Since the Cayley table representation itself has size $\Theta(n^2)$, the algorithm is asymptotically optimal in this model.
	
	\section{Conclusion}
	
	We have presented a direct algorithm for constructing a subgroup
	of prescribed order in a finite abelian group given by its Cayley table.
	The method operates directly on the multiplication structure
	and proceeds by systematically eliminating cyclic components
	whose orders are incompatible with the desired subgroup order.
	
	Since the Cayley table representation requires $\Theta(n^2)$ space
	for a group of order $n$,
	the $O(n^2)$ running time of the algorithm
	matches the lower bound imposed by the input size,
	and is therefore optimal in this model.
	
	The results demonstrate that,
	within the explicit Cayley table representation,
	the subgroup-of-order problem for finite abelian groups
	admits a simple and efficient constructive solution.
	
	\bibliographystyle{plain}
	\bibliography{recsubg}
	
	
	
	
	
	
	
	
\end{document}